\newtheorem{proposition}{Proposition}
\newtheorem{lemma}[proposition]{Lemma}
\newtheorem{example}[proposition]{Example}
\newtheorem{sta}[proposition]{Statement}
\title{A note on syndeticity, recognizable sets and Cobham's theorem}
\author{Michel Rigo, Laurent Waxweiler\thanks{University of Li\`ege,
    Department of Mathematics, Grande Traverse 12 (B 37), B-4000
    Li\`ege, Belgium. \texttt{M.Rigo@ulg.ac.be}}}
\date{}
\begin{document}
\maketitle

\begin{abstract}
    In this note, we give an alternative proof of the following
    result.  Let $p,q\ge 2$ be two multiplicatively independent
    integers. If an infinite set of integers is both $p$- and
    $q$-recognizable, then it is syndetic. Notice that this result is
    needed in the classical proof of the celebrated Cobham's theorem.
    Therefore the aim of this paper is to complete \cite{Pe} and
    \cite{AS} to obtain an accessible proof of Cobham's theorem.
\end{abstract}

\section{Introduction}
Cobham's theorem is related to numeration systems and can be
considered as a classical result in formal languages theory. It is
formulated as follows. Let $p,q\ge 2$ be two multiplicatively
independent integers (i.e., the only integers satisfying $p^k=q^\ell$
are $k=\ell=0$).  If a subset $X\subseteq\mathbb{N}$ of integers is
both $p$- and $q$-recognizable then it is a finite union of arithmetic
progressions (i.e., $X$ is an {\it ultimately periodic} set). Recall
that $X\subset\mathbb{N}$ is said to be {\it $p$-recognizable} if the
language $\rho_p(X)$ of the $p$-ary representations (without leading
zeroes) of the elements in $X$ is a regular language accepted by a
finite automaton (see for instance \cite[Chap. 5]{Ei}).  This famous
result has been widely studied from various points of view (we give
here just a few references): extension to non-standard numeration
systems \cite{Du,Ha} or to the framework of $k$-regular sequences
\cite{bell}, study of the multidimensional case (known as
Cobham-Semenov's theorem) \cite{BHMV,PB}, alternative proofs using the
formalism of the first order logic \cite{Bes,MV}, \ldots.

The original proof due to Cobham is widely considered as rather
difficult \cite{Co}. In his book, S.~Eilenberg proposed as a challenge
to find an easier proof \cite{Ei}. The major improvements in the
simplification of the proof of Cobham's theorem were made by G. Hansel
in \cite{Ha1} where he makes use of the notion of syndeticity and
sketches the key-points leading to the result. Recall that an infinite
set of integers $X=\{x_0<x_1<\cdots\}$ is said to be {\it syndetic} if
there exists $C>0$ such that for all $n\ge 1$, $x_n-x_{n-1}\le C$.
(Notice that Hansel's ideas about syndeticity also hold in a wider
framework than $p$-ary numeration systems \cite{Ha2}.)

Afterwards, a great work of presentation relying on the main ideas
found in \cite{Ha1} was made by several authors \cite{AS,Pe}.
Unfortunately, in these last two documents a same mistake can be found
(Statement \ref{lem:faux} below is not correct and
Example \ref{exa:c} is a counter-example). In this note, our modest
contribution is to correct this error using as simple arguments as
possible. In the spirit, we are naturally close to \cite{Co} and
\cite{Ha1} but new ideas appear in our reasoning.  Finally, we hope
that this erratum added to \cite{Pe} or \cite{AS} will now give a
complete presentation of the proof of Cobham's theorem.

Let us set $\Sigma_p:=\{0,\ldots,p-1\}$ as the alphabet of the $p$-ary
digits. In \cite{AS,Pe}, the following result is presented.
\begin{sta}\label{lem:faux}
  If an infinite $p$-recognizable set $X\subseteq\mathbb{N}$ is such
  that $0^*\rho_p(X)$ is right dense, i.e., for all $u\in\Sigma_p^*$
  there exists $v\in\Sigma_p^*$ such that $uv\in0^*\rho_p(X)$, then
  $X$ is syndetic.
\end{sta}

\begin{example}\label{exa:c}
  As stated above, Statement \ref{lem:faux} is not correct.  An easy
  counter-example is given by the following set $X$ of integers
  $$X=\bigcup_{i\ge 0}[2^{2i},2^{2i+1}[.$$
  Indeed, this set is
  $2$-recognizable : $\rho_2(X)=1\{00,01,10,11\}^*$, and trivially
  right dense but not syndetic.
\end{example}
In the literature, Statement \ref{lem:faux} is generally presented to
obtain the following proposition.

\begin{proposition}\label{pro:1}\cite[Prop. 5]{Ha1}
  Let $p,q\ge 2$ be two multiplicatively independent integers. If an
  infinite set of integers if both $p$- and $q$-recognizable, then it
  is syndetic.
\end{proposition}
In substance, this latter result can naturally be found in Cobham's
work (see \cite[Lemma 3]{Co}). In this note, our aim is to give an
alternative proof of Proposition~\ref{pro:1} not using
Statement~\ref{lem:faux}. Our approach relies on five easy lemmas.

\section{Proof of the result}

We assume that the reader has some basic knowledge in automata theory
(see for instance \cite{Ei}). If $X\subseteq\mathbb{N}$ is a set of
integers, we define a mapping (or a right-infinite word)
$\mathbf{1}_X:\mathbb{N}\to\{0,1\}$ such that $\mathbf{1}_X(n)=1$ if
and only if $n\in X$. If $w$ is a finite word, $|w|$ denotes its length.

This first lemma will be useful in the proof of Lemma~\ref{lem:3} and
\ref{lem:4}.
\begin{lemma}\label{lem:1}
  Let $\mathcal{A}=(Q,q_0,F,\Sigma,\delta)$ be a DFA (Deterministic
  Finite Automaton) with $\delta:Q\times\Sigma^*\to Q$ as transition
  function. For any state $s\in Q$, the set
  $$L_s:=\{|w|\in\mathbb{N} : w\in\Sigma^*, \delta(s,w)\in F\}$$
  is
  such that $\mathbf{1}_{L_s}$ is ultimately periodic, i.e., there
  exist $N\ge 0$ and $P>0$ such that for all $n\ge N$,
  $\mathbf{1}_{L_s}(n)=\mathbf{1}_{L_s}(n+P)$.
\end{lemma}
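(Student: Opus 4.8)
The plan is to reduce the question about the existence of \emph{some} word of a given length to a deterministic dynamics on the finite collection of subsets of $Q$. For each $n\ge 0$, consider the set of states reachable from $s$ by reading an arbitrary word of length $n$,
$$R_n:=\{\delta(s,w) : w\in\Sigma^*,\ |w|=n\}\subseteq Q.$$
The crucial observation is that $n\in L_s$ if and only if $R_n\cap F\ne\emptyset$: there exists a word $w$ of length $n$ with $\delta(s,w)\in F$ precisely when at least one state reachable after $n$ steps is final. Hence the value $\mathbf{1}_{L_s}(n)$ is entirely determined by $R_n$.

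Next I would show that the sequence $(R_n)_{n\ge 0}$ is governed by a single map. Indeed $R_0=\{s\}$ and, reading one more letter, $R_{n+1}=\Phi(R_n)$ where $\Phi:2^Q\to 2^Q$ is the fixed function $\Phi(S):=\{\delta(t,a) : t\in S,\ a\in\Sigma\}$. Thus the whole sequence is the forward orbit of $\{s\}$ under iteration of $\Phi$, that is, $R_n=\Phi^n(\{s\})$ for every $n$.

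Since $Q$ is finite, the powerset $2^Q$ is finite as well, so by the pigeonhole principle the terms of $(R_n)_{n\ge 0}$ cannot all be pairwise distinct: there exist $N\ge 0$ and $P>0$ with $R_N=R_{N+P}$. Because each term determines the next through the single map $\Phi$, an immediate induction propagates this coincidence, giving $R_{n}=R_{n+P}$ for all $n\ge N$. Combining this with the equivalence $\mathbf{1}_{L_s}(n)=1\Longleftrightarrow R_n\cap F\ne\emptyset$ yields $\mathbf{1}_{L_s}(n)=\mathbf{1}_{L_s}(n+P)$ for every $n\ge N$, which is exactly the claimed ultimate periodicity.

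The argument is elementary and I do not expect a genuine obstacle; the one idea that must be spotted is to track the full \emph{set} $R_n$ of reachable states rather than individual computation paths, since it is this set — and not any single path — that evolves deterministically and therefore lives in the finite state space $2^Q$. Once that reformulation is made, eventual periodicity follows at once from finiteness, and the passage from the periodicity of $(R_n)$ to that of $\mathbf{1}_{L_s}$ is purely routine.
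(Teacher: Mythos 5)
Your proof is correct and is essentially the paper's own argument: your $R_n$ is exactly the paper's $f_s(n)=\{\delta(s,w):w\in\Sigma^n\}$, and both proofs apply the pigeonhole principle on the finite powerset of $Q$ and propagate the coincidence forward (you via one-step iteration of $\Phi$, the paper via the identity $f_s(a_s+n)=\bigcup_{r\in f_s(a_s)}f_r(n)$, which amounts to the same thing). The final reduction $\mathbf{1}_{L_s}(n)=1\Leftrightarrow R_n\cap F\neq\emptyset$ is likewise the paper's concluding observation.
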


\begin{proof}
  For any state $s\in Q$, we define a mapping $$f_s:\mathbb{N}\to
  \mathcal{P}(Q):n\mapsto\{\delta(s,w):w\in\Sigma^n\}.$$
  Since
  $\mathcal{P}(Q)$ is finite, there exist $a_s$ and $b_s$ such that
  $a_s<b_s$ and $f_s(a_s)=f_s(b_s)$. Obviously, for any
  $u,v\in\Sigma^*$, $\delta(s,uv)=\delta(\delta(s,u),v)$. Consequently
  for all $n\ge 0$,
  $$f_s(a_s+n)=\bigcup_{r\in f_s(a_s)}f_r(n)=\bigcup_{r\in
    f_s(b_s)}f_r(n)=f_s(b_s+n).$$
  In other words, $f_s$ is ultimately
  periodic: $f_s(n)=f_s(n+b_s-a_s)$ if $n\ge a_s$. To conclude the
  proof, observe that $\mathbf{1}_{L_s}=\mathbf{1}_{F_s}$ where
  $F_s=\{n\in\mathbb{N}:f_s(n)\cap F\neq\emptyset\}$.
\end{proof}

\begin{lemma}\label{lem:2}
    Let $m,n,a,b,c,d\in\mathbb{N}\setminus\{0\}$ be arbitrary integers
    such that $n<m$ and $p,q$ be two multiplicatively independent
    integers.  Then there exist integers $k,\ell \ge 1$ such that
    $nq^{c+d\ell} \le mp^{a+bk} < (m+1)p^{a+bk} \le (n+1)q^{c+d\ell}$.
\end{lemma}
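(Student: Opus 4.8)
The plan is to take logarithms so that the four inequalities collapse into a single membership condition of the form $(a+bk)\log p-(c+d\ell)\log q\in J$ for a fixed interval $J$, and then to realize that condition by exploiting the density of an irrational orbit while keeping $k,\ell\ge 1$.

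First I would observe that, dividing the two target inequalities $nq^{c+d\ell}\le mp^{a+bk}$ and $(m+1)p^{a+bk}\le(n+1)q^{c+d\ell}$ by $mq^{c+d\ell}$ and by $(m+1)q^{c+d\ell}$ respectively, the whole statement is equivalent to
$$\frac{n}{m}\le\frac{p^{a+bk}}{q^{c+d\ell}}\le\frac{n+1}{m+1}.$$
Since $n<m$, one has $\frac{n+1}{m+1}-\frac{n}{m}=\frac{m-n}{m(m+1)}>0$, so after taking logarithms the task reduces to finding $k,\ell\ge 1$ with
$$(a+bk)\log p-(c+d\ell)\log q\in J,\qquad J:=\Bigl[\log\tfrac{n}{m},\ \log\tfrac{n+1}{m+1}\Bigr],$$
a nondegenerate closed interval of positive length $|J|$.

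Next I would use multiplicative independence. By definition $p^{k}=q^{\ell}$ forces $k=\ell=0$, which is equivalent to $\log p/\log q$ being irrational, and hence $b\log p/(d\log q)$ is irrational as well. By Kronecker's theorem the orbit $\{\,bk\log p \bmod d\log q : k\ge 1\,\}$ is dense in $[0,d\log q)$. I would use this density to pick $k\ge 1$ for which the residue modulo $d\log q$ of the number $x_k:=(a+bk)\log p-c\log q$ is placed in the length-$|J|$ window guaranteeing the existence of an integer $\ell$ with $x_k-d\ell\log q\in J$ (when $|J|\ge d\log q$ this holds for every $k$ automatically).

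The main obstacle is the positivity requirement $k,\ell\ge 1$, since Kronecker's theorem a priori only supplies integer solutions. This is settled by taking the index $k$ large: density provides infinitely many admissible $k\ge 1$, and for any such large $k$ the quantity $x_k$ is large and positive, so the integer $\ell$ with $x_k-d\ell\log q\in J$ satisfies $d\ell\log q\ge x_k-\log\frac{n+1}{m+1}\gg 0$, which forces $\ell\ge 1$. Selecting one such pair $(k,\ell)$ then yields, after exponentiating, exactly the chain of inequalities claimed in the statement.
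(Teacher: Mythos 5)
Your proof is correct and takes essentially the same route as the paper: both reduce the chain of inequalities to placing $p^{a+bk}/q^{c+d\ell}$ in the interval $\left[\frac{n}{m},\frac{n+1}{m+1}\right]$ (equivalently, $\frac{nq^c}{mp^a}\le\frac{(p^b)^k}{(q^d)^\ell}\le\frac{(n+1)q^c}{(m+1)p^a}$) and then invoke Kronecker's theorem via the irrationality of $\log p^b/\log q^d$, which follows from the multiplicative independence of $p$ and $q$. You merely spell out what the paper compresses into ``a direct consequence of Kronecker's theorem,'' namely the density of the orbit modulo $d\log q$ and the observation that taking $k$ large forces $\ell\ge 1$.
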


\begin{proof}
  It is enough to find integers $k,\ell$ satisfying
  $$\frac{nq^c}{mp^a}\le \frac{(p^b)^k}{(q^d)^\ell}\le
  \frac{(n+1)q^c}{(m+1)p^a}.$$
  This is a direct consequence of
  Kronecker's theorem (because $p^b$ and $q^d$ are still
  multiplicatively independent hence $\log p^b/\log q^d$ is
  irrational) \cite{HW}.
\end{proof}

\begin{lemma}\label{lem:3}
    Let $p\ge 2$ and $X\subseteq\mathbb{N}$ be an infinite
    $p$-recognizable set. Then there exist integers $m,a,b\ge 1$ such
    that for all $k\in\mathbb{N}$, the set
    $X\cap[mp^{a+bk},(m+1)p^{a+bk}[$ is nonempty. Moreover, the
    integer $m$ can be chosen arbitrarily large.
\end{lemma}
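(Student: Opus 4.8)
The plan is to translate the interval condition into a statement about a single fixed automaton and then extract an arithmetic progression from a pumpable accepting path. Let $\mathcal{A}=(Q,q_0,F,\Sigma_p,\delta)$ be a DFA recognizing $\rho_p(X)$. The first observation is that an integer $n$ lies in $[mp^{j},(m+1)p^{j}[$ if and only if its $p$-ary representation has the form $\rho_p(m)w$ with $w\in\Sigma_p^{j}$ (since $\rho_p(m)$ carries no leading zero, $\rho_p(m)w$ is a genuine representation of an integer in that interval). Hence, writing $s=\delta(q_0,\rho_p(m))$, the set $X\cap[mp^{j},(m+1)p^{j}[$ is nonempty exactly when $j\in L_s$, where $L_s$ is the set introduced in Lemma~\ref{lem:1}. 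It therefore suffices to produce an arbitrarily large $m\ge 1$ for which $L_s$ contains a full arithmetic progression $\{a+bk:k\in\mathbb{N}\}$ with $a,b\ge 1$.

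First I would locate a pumpable accepting path. Since $X$ is infinite, $\rho_p(X)$ contains words of arbitrarily large length; I pick $w\in\rho_p(X)$ with $|w|>|Q|$ and let $r_0=q_0,r_1,\dots,r_{|w|}\in F$ be the states visited while reading $w$. By the pigeonhole principle two of the states $r_1,\dots,r_{|w|}$ coincide, say $r_i=r_j=:t$ with $1\le i<j\le|w|$. Factoring $w=w_1w_2w_3$ with $|w_1|=i$ and $|w_1w_2|=j$, we get $\delta(q_0,w_1)=t$, $\delta(t,w_2)=t$ and $\delta(t,w_3)\in F$. Reading the loop $w_2$ any number of times before $w_3$ still lands in $F$, so $(|w|-j)+\ell(j-i)\in L_t$ for every $\ell\ge 0$; reindexing, the progression $\{(|w|-i)+k(j-i):k\in\mathbb{N}\}$ lies in $L_t$. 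In particular $L_t$ is infinite, and I may take $a=|w|-i\ge 1$ and $b=j-i\ge 1$. (Alternatively, once $L_t$ is known to be infinite one can simply invoke Lemma~\ref{lem:1}: its ultimate periodicity forces $L_t$ to contain a progression whose common difference is the period.)

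It remains to make $m$ arbitrarily large without disturbing the state $t$, and here the loop $w_2$ is again the key. For every $h\ge 0$ the word $w_1w_2^{h}$ begins with the nonzero first letter of $w_1$ (because $w$ has no leading zero and $i\ge1$), hence equals $\rho_p(m_h)$ for the integer $m_h$ it represents, and $m_h\ge p^{|w_1w_2^{h}|-1}\to\infty$. Since $\delta(q_0,w_1w_2^{h})=\delta(t,w_2^{h})=t$, every $m_h$ yields the \emph{same} state $t$, hence the same set $L_t$ and the same pair $(a,b)$ obtained above. Choosing $m=m_h$ for large $h$ thus gives an arbitrarily large $m\ge1$ with $a+bk\in L_{\delta(q_0,\rho_p(m))}$ for all $k\in\mathbb{N}$, which by the reduction of the first paragraph is exactly the claim.

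The step I expect to be the main obstacle is this last one. Extracting the progression is routine once a loop has been found, but ensuring that $m$ can be taken arbitrarily large while the accepting-length structure stays fixed requires the observation that pumping the prefix $w_1$ with the loop $w_2$ simultaneously preserves the state $t$ (and therefore $L_t$ and the pair $(a,b)$), inflates the represented integer, and keeps the representation free of leading zeros.
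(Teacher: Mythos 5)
Your proof is correct, but it follows a genuinely different route from the paper's. The paper first selects an arbitrarily large $m$ such that $\rho_p(m)$ is a prefix of infinitely many words of $\rho_p(X)$ (a K\"onig-style pigeonhole over prefixes), sets $s=\delta(q_0,\rho_p(m))$, and then invokes Lemma~\ref{lem:1}: the ultimate periodicity of $\mathbf{1}_{L_s}$ supplies the common difference $b$, while the infinitude of accepting extensions of $\rho_p(m)$ supplies some $a\in L_s$ beyond the preperiod, whence $a+kb\in L_s$ for all $k$. You instead bypass Lemma~\ref{lem:1} entirely via the pumping lemma: the loop $w_2$ in an accepting path yields the progression $(|w|-i)+k(j-i)\subseteq L_t$ directly, and --- this is your nicest point --- the same loop, pumped into the prefix as $w_1w_2^h$, inflates $m$ without leaving the state $t$, so the pair $(a,b)$ is preserved verbatim; your checks that $i\ge 1$ keeps the leading digit nonzero and that $b=j-i\ge 1$ forces $m_h\to\infty$ are exactly the points that needed care, and they hold. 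What each approach buys: yours is self-contained and constructive, needing neither Lemma~\ref{lem:1} nor the selection of a prefix of infinitely many accepted words; the paper's, on the other hand, factors the argument through Lemma~\ref{lem:1}, which it must prove anyway for Lemma~\ref{lem:4} --- and note that your pumping technique could not substitute there, since Lemma~\ref{lem:4} requires a progression of lengths \emph{avoiding} $L_s$, i.e., information about the complement that pumping an accepting path cannot provide, whereas the ultimate periodicity of $\mathbf{1}_{L_s}$ handles both sides uniformly. So the paper's route minimizes total machinery across the five lemmas, while yours minimizes the machinery for this lemma in isolation.
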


\begin{proof}
    Let $\mathcal{A}=(Q,q_0,F,\Sigma,\delta)$ be a DFA recognizing
    $\rho_p(X)$. Since $X$ is infinite, there exists $m>0$ arbitrarily
    large such that $\rho_p(m)$ is prefix of an infinite number of
    elements in $\rho_p(X)$. Let $s=\delta(q_0,\rho_p(m))$. By Lemma
    \ref{lem:1}, there exist $\alpha\ge 0$ and $b>0$ such that
    $\mathbf{1}_{L_s}(n)=\mathbf{1}_{L_s}(n+b)$ for all $n\ge \alpha$.
  
  For any $t\ge 0$, the interval $[mp^{t},(m+1)p^t[$ contains all the
  integers having a $p$-ary representation of the form $\rho_p(m)w$
  with $|w|=t$. Since the set $(\rho_p(m)\Sigma_p^*)\cap \rho_p(X)$ is
  infinite, there exists a word $v$ such that $\rho_p(m)v$ is the
  $p$-ary representation of an element in $X$ with $|v|> \alpha$.
  Take $a=|v|$.  Consequently, the interval $[mp^a,(m+1)p^a[$ contains
  an element belonging to $X$. The conclusion follows from the
  periodicity of $\mathbf{1}_{L_s}$:
  $\mathbf{1}_{L_s}(a)=\mathbf{1}_{L_s}(a+kb)=1$, for all $k\ge 0$.
\end{proof}

Recall that a state $s$ is said to be {\it accessible} (resp. {\it
  coaccessible}) if there exists a word $w$ such that
$\delta(q_0,w)=s$ (resp. $\delta(s,w)\in F$). The {\it trimmed}
minimal automaton of a language $L$ is obtained by taking only states
which are accessible and coaccessible.

\begin{lemma}\label{lem:4}
  Let $p\ge 2$ and $X\subseteq\mathbb{N}$ be an infinite
  $p$-recognizable set such that
  $\mathcal{A}=(Q,q_0,F,\Sigma_p,\delta)$ is the trimmed minimal
  automaton of $\rho_p(X)$. If there exists a state $s$ such that
  $\mathbb{N}\setminus L_s$ is infinite, then there exist integers
  $m,a,b\ge 1$ such that for all $k\in\mathbb{N}$, the set
  $X\cap[mp^{a+bk},(m+1)p^{a+bk}[$ is empty.
\end{lemma}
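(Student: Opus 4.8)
The plan is to establish an exact correspondence between the emptiness of the intervals $[mp^t,(m+1)p^t[$ and the non-membership $t\notin L_s$, mirroring the proof of Lemma~\ref{lem:3} but now exploiting a \emph{missing} length rather than a present one. First I would fix a suitable integer $m$ whose representation drives the automaton into $s$. Since $\mathcal{A}$ is trimmed, $s$ is both accessible and coaccessible: choose $u$ with $\delta(q_0,u)=s$ and $w'$ with $\delta(s,w')\in F$, so that $uw'\in\rho_p(X)$. As the words of $\rho_p(X)$ carry no leading zero, neither does its prefix $u$, and hence $u=\rho_p(m)$ where $m$ is the integer represented by $u$, with $m\ge 1$ whenever $u$ is nonempty, i.e. whenever $s\neq q_0$.

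The heart of the argument is the following equivalence. Recall from the proof of Lemma~\ref{lem:3} that $[mp^t,(m+1)p^t[$ is precisely the set of integers whose $p$-ary representation has the form $\rho_p(m)w$ with $|w|=t$. Such an integer lies in $X$ if and only if $\rho_p(m)w\in\rho_p(X)$, that is, if and only if $\delta(q_0,\rho_p(m)w)=\delta(s,w)\in F$. Consequently $X\cap[mp^t,(m+1)p^t[\neq\emptyset$ if and only if $t\in L_s$, so the interval is empty exactly when $\mathbf{1}_{L_s}(t)=0$.

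It then remains to select the exponents. By Lemma~\ref{lem:1}, $\mathbf{1}_{L_s}$ is ultimately periodic: there are $N\ge 0$ and $P>0$ with $\mathbf{1}_{L_s}(n)=\mathbf{1}_{L_s}(n+P)$ for all $n\ge N$. Since $\mathbb{N}\setminus L_s$ is infinite, the complement must meet the purely periodic regime, so some $a\ge\max(N,1)$ satisfies $\mathbf{1}_{L_s}(a)=0$; choosing $b=P$ then yields $\mathbf{1}_{L_s}(a+bk)=0$ for every $k\ge 0$. By the correspondence above this gives $X\cap[mp^{a+bk},(m+1)p^{a+bk}[=\emptyset$ for all $k$, which is the assertion.

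The main obstacle I anticipate is purely the bookkeeping around leading zeros and the degenerate case $s=q_0$, where the accessing word is empty and the correspondence would force the forbidden value $m=0$. To circumvent this I would replace $s$ by the state $\delta(q_0,1)$: writing $L_{q_0}\setminus\{0\}=\bigcup_{d=1}^{p-1}\bigl(1+L_{\delta(q_0,d)}\bigr)$, the hypothesis that $\mathbb{N}\setminus L_{q_0}$ is infinite forces $\mathbb{N}\setminus L_{\delta(q_0,d)}$ to be infinite for \emph{every} digit $d$, and in particular for $d=1$, which is reached by the valid representation $\rho_p(1)$ with $m=1\ge 1$. Once this reduction guarantees a nonempty representation $\rho_p(m)$ with $m\ge 1$, the equivalence $t\notin L_s\iff X\cap[mp^t,(m+1)p^t[=\emptyset$ applies verbatim and the ultimate periodicity from Lemma~\ref{lem:1} finishes the proof.
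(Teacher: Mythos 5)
Your proof is correct and follows essentially the same route as the paper's: fix $m\ge 1$ with $\delta(q_0,\rho_p(m))=s$, translate emptiness of $X\cap[mp^t,(m+1)p^t[$ into $t\notin L_s$, and use the ultimate periodicity of $\mathbf{1}_{L_s}$ from Lemma~\ref{lem:1} to choose $a$ in the periodic regime with $\mathbf{1}_{L_s}(a)=0$ and $b$ equal to the period. Your digit decomposition for the case $s=q_0$ is a rigorous version of the paper's parenthetical remark (and even sharper, yielding \emph{every} $\delta(q_0,d)$); the only micro-detail to add is that in the trimmed automaton $\delta(q_0,1)$ may be undefined, in which case no representation starts with the digit $1$, so $X\cap[p^t,2p^t[=\emptyset$ for all $t$ and the conclusion holds trivially with $m=1$.
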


\begin{proof}
  Let $s$ be a state such that $\mathbb{N}\setminus L_s$ is infinite.
  Without loss of generality, we may assume that $s\neq q_0$ and there
  exists $m>0$ such that $\delta(q_0,\rho_p(m))=s$. (Indeed, if
  $\mathbb{N}\setminus L_{q_0}$ is infinite then the same property
  holds for some other state $s$.) We use the same reasoning as in the
  previous proof. Thanks to Lemma~\ref{lem:1}, there exist $\alpha\ge
  0$ and $b>0$ such that $\mathbf{1}_{L_s}(n)=\mathbf{1}_{L_s}(n+b)$
  for all $n\ge \alpha$.  Since $\mathbb{N}\setminus L_s$ is infinite,
  there exists $a> \alpha$ such that no word $v$ of length $a$ is
  such that $\delta(s,v)\in F$. In other words, if $|v|=a$ then
  $\rho_p(m)v\not\in\rho_p(X)$ and the interval $[mp^a,(m+1)p^a[$ does
  not contain any element of $X$. Once again, the conclusion follows
  from the periodicity of $\mathbf{1}_{L_s}$.
\end{proof}

The last lemma is a simple consequence of the three previous ones.

\begin{lemma}\label{lem:5}
  Let $q>p\ge 2$ be two multiplicatively independent integers and
  $X\subseteq\mathbb{N}$ be an infinite $p$- and $q$-recognizable set
  of integers. If $\mathcal{A}=(Q,q_0,F,\Sigma_p,\delta)$ is trimmed
  minimal automaton of $\rho_q(X)$, then for any state $r\in Q$, the
  set $L_r$ is cofinite.
\end{lemma}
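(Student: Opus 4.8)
The plan is to argue by contradiction, combining the three combinatorial Lemmas~\ref{lem:2}, \ref{lem:3} and \ref{lem:4}. Suppose that some state $r\in Q$ is such that $L_r$ is \emph{not} cofinite, i.e., $\mathbb{N}\setminus L_r$ is infinite. Since $\mathcal{A}$ is the trimmed minimal automaton of $\rho_q(X)$ and $X$ is $q$-recognizable, I would apply Lemma~\ref{lem:4} with the base $q$ in place of $p$ and with $r$ playing the role of the state $s$. This produces integers $n,c,d\ge 1$ such that, for every $\ell\in\mathbb{N}$,
$$X\cap[nq^{c+d\ell},(n+1)q^{c+d\ell}[\,=\emptyset.$$

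Next I would exploit the $p$-recognizability of $X$. By Lemma~\ref{lem:3}, there exist integers $m,a,b\ge 1$ such that for all $k\in\mathbb{N}$ the interval $[mp^{a+bk},(m+1)p^{a+bk}[$ meets $X$. The decisive feature here is the final clause of Lemma~\ref{lem:3}: the integer $m$ may be chosen arbitrarily large, so I fix a choice with $m>n$, thereby matching the standing hypothesis $n<m$ of Lemma~\ref{lem:2}.

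With $n<m$ secured, the hypotheses of Lemma~\ref{lem:2} hold for the six integers $m,n,a,b,c,d$, and that lemma yields $k,\ell\ge 1$ with
$$nq^{c+d\ell}\le mp^{a+bk}<(m+1)p^{a+bk}\le(n+1)q^{c+d\ell},$$
so the $p$-interval $[mp^{a+bk},(m+1)p^{a+bk}[$ is entirely nested inside the $q$-interval $[nq^{c+d\ell},(n+1)q^{c+d\ell}[$. But the former contains a point of $X$ (by Lemma~\ref{lem:3}) while the latter is disjoint from $X$ (by Lemma~\ref{lem:4}), which is absurd. Hence no state $r$ with $\mathbb{N}\setminus L_r$ infinite can exist, and $L_r$ is cofinite for every $r\in Q$. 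I expect the only delicate point to be the bookkeeping that aligns the three lemmas, namely guaranteeing $n<m$ (precisely the reason the ``arbitrarily large $m$'' clause was built into Lemma~\ref{lem:3}) and checking that Lemma~\ref{lem:4} applies verbatim once the roles of $p$ and $q$ are interchanged; no new estimate beyond these three results should be required.
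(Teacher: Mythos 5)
Your proposal is correct and is essentially identical to the paper's own proof: the same contradiction argument, applying Lemma~\ref{lem:4} in base $q$ to get the empty $q$-intervals, Lemma~\ref{lem:3} in base $p$ with the ``arbitrarily large $m$'' clause invoked precisely to secure $m>n$, and Lemma~\ref{lem:2} to nest a nonempty $p$-interval inside an empty $q$-interval. Your write-up even makes the final contradiction slightly more explicit than the paper's, but there is no difference in substance.
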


\begin{proof}
    Assume to the contrary that $\mathbb{N}\setminus L_r$ is infinite.
    By Lemma~\ref{lem:4}, there exist $n,c,d\ge 1$ such that for
    all $\ell\in\mathbb{N}$, $X\cap[nq^{c+d\ell},(n+1)q^{c+d\ell}[$ is
    empty.
    
    By Lemma~\ref{lem:3}, there also exist $m,a,b\ge 1$ such that for
    all $k\in\mathbb{N}$, $X\cap[mp^{a+bk},(m+1)p^{a+bk}[$ is nonempty
    and $m>n$.

  To obtain a contradiction, simply observe that as a consequence of
  Lemma~\ref{lem:2}, there exist $K,L \ge 1$ such that
  $nq^{c+dL} \le mp^{a+bK} < (m+1)p^{a+bK} \le (n+1)q^{c+dL}$.
\end{proof}

We now have at our disposal all the necessary material to conclude
this short note.
\begin{proof}[Proof of Proposition \ref{pro:1}]
  Assume that $q>p$. Let $\mathcal{A}=(Q,q_0,F,\Sigma,\delta)$ be the
  trimmed minimal automaton of $\rho_q(X)$. For all $n> 0$, we write
  $q_n:=\delta(q_0,\rho_q(n))$.  Thanks to Lemma~\ref{lem:5},
  $L_{q_n}$ is cofinite. This means that for all $n\ge 0$, there
  exists $C_n$ such that for all $k\ge C_n$, $k$ belongs to $L_{q_n}$.
  Clearly, $C_n$ depends only on the state $q_n$ and there are a
  finite number of such states. Let $C=\max\{C_n\}$. Consequently, for
  any $n>0$, there exists a word $w_n$ of length $C$ such that
  $\rho_q(n)w_n\in \rho_q(X)$. In other words, for any $n>0$, there
  exist $t_n\in[0,q^C[$ such that $nq^C+t_n\in X$. We conclude that
  any interval of length $2q^C$ contains at least an element belonging
  to $X$.
\end{proof}

\end{document}